\theoremstyle{definition}
\newtheorem{remark}{Remark}[section]
\newtheorem{proposition}{Proposition}[section]
\def\ps@pprintTitle{%
  \let\@oddhead\@empty
  \let\@evenhead\@empty
  \def\@oddfoot{\hfill\thepage\hfill}%
  \let\@evenfoot\@oddfoot
}
\begin{document}

\begin{frontmatter}



\title{Measures of classification bias derived from sample size analysis}


\author[DU]{Ioannis Ivrissimtzis} 
\author[DU]{Shauna Concannon} 
\author[SL]{Matthew Houliston} 
\author[SL]{Graham Roberts} 

\affiliation[DU]{organization={University of Durham}, country={UK}}
\affiliation[SL]{organization={Servelegal Ltd}, country={UK}} 

\begin{abstract}
We propose the use of a simple intuitive principle for measuring algorithmic classification bias: the significance of the differences in a classifier's error rates across the various demographics is inversely commensurate with the sample size required to statistically detect them. That is, if large sample sizes are required to statistically establish biased behavior, the algorithm is less biased, and vice versa. In a simple setting, we assume two distinct demographics, and non-parametric estimates of the error rates on them, $\varepsilon_1$ and $\varepsilon_2$, respectively. We use a well-known approximate formula for the sample size of the chi-squared test, and verify some basic desirable properties of the proposed measure. Next, we compare the proposed measure with two other commonly used statistics, the difference $\varepsilon_2 - \varepsilon_1$ and the ratio $\varepsilon_2 / \varepsilon_1$ of the error rates. We establish that the proposed measure is essentially different in that it can rank algorithms for bias differently, and we discuss some of its advantages over the other two measures. Finally, we briefly discuss how some of the desirable properties of the proposed measure emanate from fundamental characteristics of the method, rather than the approximate sample size formula we used, and thus, are expected to hold in more complex settings with more than two demographics. 
\end{abstract}







\end{frontmatter}


\section{Introduction} 
\label{sec:sec1}

We study the behavior of classifiers tested on datasets stratified in several classes, called {\em demographics}. In a typical application, the elements of the test dataset would correspond to humans, while the classes could correspond to characteristics such as race, gender, or age. 

We assume that we have non-parametric point estimates of the error rates of the classifier on each demographic. Typically, these would be empirical estimates obtained by running the algorithm on the test set and counting the rate of unsuccessful classifications on each demographic. Our aim is to propose and study a metric that will quantify the {\em bias} of the classifier, that is, measure the significance of differences in algorithmic performance across demographics. 

The main idea, which is intuitive, simple, and general, can be formulated as the following principle:
\begin{quote}
The bias of an algorithm is inversely commensurable to the sample size $N$ of a statistical test of given {\em power} to detect it with a certain {\em confidence}. 
\end{quote}
For example, if a dataset consists of two demographics, $D_1$ and $D_2$, and the error rates of an algorithm on them are $\varepsilon_1$ and $\varepsilon_2$, respectively, to answer the question of how significant the difference between $\varepsilon_1$ and $\varepsilon_2$ is, we look at how many samples from the dataset are required to statistically detect that difference. The more samples we need, the smaller the bias of the classifier. 

Notice that here we do not assess the statistical significance of the point estimates of the algorithm's error rates on each demographic. We take them as given, and we want to arrive at an assessment of the bias of the algorithm given these point estimates. Although this might seem as a limitation, as shifting the problem rather than addressing it, we believe that, nevertheless, it is an essential property for establishing a coherent bias metric. It keeps separate two issues often conflated in the literature; what we measure and the process of measuring it. Indeed, if instead we had defined bias as a measure of the probability that the error rates $\varepsilon_1$ and $\varepsilon_2$ are different, then the measured bias would most likely increase as we test the algorithm more. However, that increase in measured bias would not reflect any change in the behavior of the algorithm, but rather a decrease of our uncertainty over its behavior. 


\subsection{Paper organization}

In Section~\ref{sec:sec2}, we discuss the background and related work. In Section~\ref{sec:sec3}, we apply the proposed general principle on the specific instance of classifiers on datasets consisting of two demographics. We adopt the chi-squared test as the means of statistically detecting performance differences and use a well-established approximate closed-form formula to estimate sample sizes. We show that the sample size depends on the power and confidence of the test by just a multiplicative constant and establish some other properties that are desirable for any measure of bias. 

In Section~\ref{sec:sec4}, we compare the behavior of the proposed bias measure against two widely used metrics, the difference $\varepsilon_2 - \varepsilon_1$ and the ratio $\varepsilon_2 / \varepsilon_2$ of the error ratios. First, we study the behavior of the proposed metric when the difference or ratio of the error rates is constant. Then we illustrate the method with actual data from a machine learning classification problem of face liveness. 

In Section~\ref{sec:sec5} we further justify our approach arguing that, more generally, for any number of demographics, any statistical test with some basic desirable properties will lead to a measure of bias with some equivalent desirable properties. For example, when the error rates on all demographics tend to zero, the bias would also tend to zero. 

In Section~\ref{sec:sec6}, we discuss as potential directions for future research some alternatives to the standard approach to sample size analysis adopted here, and we briefly conclude in Section~\ref{sec:sec7}.


\subsection{Contributions and limitations}

The main contribution of the paper is a new algorithmic bias metric based on sample size analysis. With the caveat of a limitation discussed below, the proposed metric has the characteristics of a good metric put forward in \cite{hand2009} and further articulated in \cite{anagnostopoulos2017}. It captures coherently the aspect of performance of interest, sharply separating the algorithmic bias from the uncertainty in its measurement. It is simple to report as a single number, intuitive and, at least in the simple setting of two demographics, computationally tractable. 

As a secondary contribution, in Section \ref{sec:sec4-2}, in the comparison of the proposed metric with the ratios of error rates, we highlight an aspect of algorithmic bias that we believe it has been largely overlooked in the literature. As algorithms evolve and we move towards a zero tolerance of misclassifications, the behavior of a metric at the vicinity of the zero error rate becomes important and should be studied. In particular, a bias metric should be relatively stable with respect to the direction from which we approach the point of overall zero error rate in the space of the error rates on demographics. 

The main limitation of the proposed bias metric is its dependence on two continuous variables, the power and significance of the statistical test, and one categorical variable, the type of statistical test. Although there are some values of these continuous variables that are widely regarded in the literature as standard, for example a power of 0.80 and a confidence of 0.95, here, for a more transparent approach, we do not fix them, using the proposed metric just for ranking algorithms for bias, rather than as an absolute measure. It is a more complicated question, out of the scope of the current paper, to study the implications of choosing a statistical test other than the chi-squared.

In a secondary limitation, here we refrain from proposing an explicit function of bias as an inversely commensurable quantity of the sample size $N$. Although there are some obvious candidates, such as $1/N$, or $-\log N$, for a more transparent approach, not having a clear justification for a choice, we preferred to work directly with $N$ as a measure of fairness that is inversely commensurable with bias. However, we note that $N$ is not just intuitive, but also has a clear physical meaning, telling us how much an algorithm can be used before bias becomes statistically detectable.



\section{Background}
\label{sec:sec2}

The study of algorithmic bias has emerged as an essentially interdisciplinary field, attracting researchers from diverse areas, such as computer science, mathematics, social sciences, and also law. 

\cite{barocas2016} was an influential paper that shaped the legal perspective on problem. Several recent surveys and scoping reviews focus on various other aspects of the problem. \cite{das2023} gives a concise review of algorithmic bias metrics, with credit scoring as its target application. The overarching \cite{schwartz2022} reviews the plethora of proposed approaches to algorithmic bias one has to navigate. \cite{pagano2023}, based on the PRISMA guidelines, uses several bias metrics and datasets in a comprehensive technical comparative study. 

In \cite{corbett2023}, the emphasis is on the alignment of the adopted definitions of fairness with the stated policy goals they serve, arguing for more context-specific rather that axiomatic definitions of algorithmic bias. \cite{barr2024} proposes a practical framework for assisting the selection of context appropriate bias metrics. \cite{friedler2021}, in the opposite direction, studies the assumptions principles and beliefs, that, implicitly most of the times, underline the proposed definitions of bias. 

We note that despite the considerable research effort in understanding the ethical, legal, policy, and regulatory implications of the various measures of bias, most of the proposed metrics are tailored towards the needs of the researchers. Their aim is to obtain insights that will support corrective interventions, either during training or inference, through an essentially white-box analysis. In \cite{ivrissimtzis2024}, zero-failure testing is proposed as a black-box approach to algorithmic performance assessment that could be more suitable for policy-makers and regulators.  


\subsection{Measuring bias}

Similarly to our approach, several proposed metrics of bias aim at assessing the fairness of an algorithm with respect to various groups of people. In the seminal \cite{hardt2016}, the Equalized Odds and the Equal Opportunity were introduced as fairness measures with respect to the accuracy rate or the TPR of an algorithm. \cite{martinez2020} proposed minimax Pareto fairness, defining group fairness through the worst case scenario. The method was further studied and evaluated in \cite{diana2021}. \cite{dealcala2023} proposed the use of the difference of accuracy rates normalized by the empirical standard deviation as a bias metric. \cite{suarez2025} proposed a framework for the use of classification bias metrics in regression problems with continuous outcomes. 


In a different approach, several metrics have been proposed aiming to establish fairness with respect to the individual. This approach was pioneered in \cite{dwork2012}, where the concept of a task-specific similarity metric for the individual was introduced. A significant strand of research on individual fairness is based on the concept of counterfactual fairness, introduced in \cite{kusner2017}, employing a form of causation analysis through Structural Causal Models \cite{pearl2009}. In \cite{mishler2021}, they propose a post-processed predictor for achieving counterfactual fairness in terms of equalized odds. In \cite{zhou2024}, an algorithm for a controlled trade-off between counterfactual fairness and algorithmic performance is proposed and validated.


\subsubsection{Intersectional bias}

Recent research, increasingly focuses on the problem of intersectional bias, where not only we might have more than two groups arising from one attribute, but overlapping groupings arising from more than one attributes. The relevance of the problem was first demonstrated in \cite{buolamwini2018}. In \cite{foulds2020}, a metric for intersectional fairness based on the 80\% rule was proposed. \cite{ghosh2021} uses minmax to define a worst case group fairness, this time in an intersectional setting. \cite{wastvedt2024} extends the work in \cite{mishler2021}, introducing counterfactual equalized odds in an intersectional setting. \cite{shukla2025} introduces intersectional sensitivity, a metric based on Wasserstein distances between ideal distributions, and uses it for the systematic analysis of the intersectional interactions of bias.

Here, we do study the intersectional bias problem, even though, in principle, the proposed sample size analysis approach can directly be adapted to it.


\subsection{Analyzing and correcting algorithmic bias}

Given the implications of the existence of algorithmic bias in deployed systems, there is a large body of research going beyond the fundamental problem of measuring it, and focusing on identifying its causes and various manifestations, as well as on proposing ways to fix it. 

\cite{amini2019}, uses a variational autoencoder to learn a latent distribution within the dataset, and adjusts the training process, increasing the weight of unrepresented regions. The measure of bias they adopt in their validating experiments is the variance in the precision of the algorithm over the various demographics. \cite{karkkainen2021} constructs a large, balanced on race, face image dataset, and reports improved and consistent across race and gender groups performance of models trained on it. \cite{delobelle2022} studies the significantly more challenging problem of biased language, comparing several pretrained large language models on various metrics, suggesting, as a more practical approach, the evaluation of language bias on downstream tasks. \cite{krug2025} studies bias on large models pretrained on ImageNet, which often serve as backbones for downstream computer vision tasks. Finally, \cite{kumar2025} proposes a method for reducing biases in computer vision models through adjustments to the data augmentation process.






\section{A bias measure for binary classifiers} 
\label{sec:sec3}

Let $C$ be a binary classifier, and $D = \{ D_1 \cup D_2 \}$ a test dataset that is the union of two disjoint demographics, $D_1$ and $D_2$. Our aim is to define and study a measure of the bias of $C$ over the two demographics, implementing the general principle discussed in Section~\ref{sec:sec1}. 

\smallskip 

Notice that in our setting, the algorithm $C$ does not classify a sample in $D$ as an element of demographic $D_1$ or demographic $D_2$. Instead, we have an unspecified binary classification task performed on the elements of $D$, with error rates $\varepsilon_1$ and $\varepsilon_2$ on $D_1$ and $D_2$, respectively. For simplicity, we may also assume that $D$ consists of samples from one class only, say positive samples, in which case $\varepsilon_1$ and $\varepsilon_2$ are the False Negative Rates of $C$ on $D_1$ and $D_2$, respectively. However, this simplifying assumption is not necessary for the validity of our study. 

\smallskip 

With bias understood here as a difference in the classifier's performance over demographics $D_1$ and $D_2$, the null hypothesis on an unbiased $C$ is
$$
H_0 : \varepsilon_1 = \varepsilon_2
$$
The sample size analysis will be done for the $\chi$-squared test, which in this setting is the simplest and most natural choice. 

\smallskip 


There are various practical approaches to the sample size computation, each depending on different approximation assumptions, as for example the approximation of the discrete binomial distributions of $\varepsilon_1$ and $\varepsilon_2$ by a continuous normal distribution. In \cite{ryan2013}, a formula for the sample size $N$ that is often used by practitioners is given as 
\begin{equation}
N_{\alpha,\beta} (\varepsilon_1,\varepsilon_2) = \displaystyle{\frac{1}{2}} \left( \displaystyle{\frac{z_\beta + z_\alpha}{\arcsin(\sqrt{\varepsilon_1})-\arcsin(\sqrt{\varepsilon_2})}} \right)^2
\label{eq:twoproportions-arcSin}
\end{equation} 
where $\alpha$ is the level of significance, $1-\beta$ is the power of the test, and $z_x$ denotes the critical value of the normal distribution at $x$. The $\arcsin$ transformation of the proportions $\varepsilon_1$ and $\varepsilon_2$ in Eq.~\ref{eq:twoproportions-arcSin} is a standard statistical technique justified when the values of either of the proportions $\varepsilon_1, \varepsilon_2$ deviate significantly from 0.5. Notice that in one-sided tests, instead of $\alpha$ one has to use $\alpha / 2$. Also, in the literature, the accuracy rates $1-\varepsilon_1, 1-\varepsilon_2$ are often used in place of the error rates $\varepsilon_1, \varepsilon_2$, but the two expositions are equivalent up to a swap of class labels. 

Proposition~\ref{prop:scalingProperty} states that the proposed measure of binary classification bias is invariant with respect to the values of $\alpha$ and $\beta$, up to a multiplicative constant. 

\begin{proposition}
\label{prop:scalingProperty}
For any $0 \leq \varepsilon_1 \leq \varepsilon_2 \leq 1$, we have 
$$
\displaystyle{\frac{N_{\alpha_1,\beta_1} (\varepsilon_1,\varepsilon_2)}{N_{\alpha_2,\beta_2} (\varepsilon_1,\varepsilon_2)}} = c
$$
where $c$ is constant that depends on $\alpha_1,\alpha_2,\beta_1,\beta_2$ only. 
\end{proposition}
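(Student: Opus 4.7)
The plan is to exploit the clean multiplicative structure of Eq.~\ref{eq:twoproportions-arcSin}: the dependence on the test parameters $(\alpha,\beta)$ is confined to the numerator factor $(z_\beta + z_\alpha)^2$, while the dependence on the error rates $(\varepsilon_1,\varepsilon_2)$ is confined to the denominator factor $[\arcsin(\sqrt{\varepsilon_1})-\arcsin(\sqrt{\varepsilon_2})]^2$. In other words, $N_{\alpha,\beta}(\varepsilon_1,\varepsilon_2)$ separates as a product $f(\alpha,\beta)\cdot g(\varepsilon_1,\varepsilon_2)$ with $f(\alpha,\beta) = \tfrac12(z_\beta+z_\alpha)^2$ and $g(\varepsilon_1,\varepsilon_2) = [\arcsin(\sqrt{\varepsilon_1})-\arcsin(\sqrt{\varepsilon_2})]^{-2}$. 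Such a separated form makes the claim essentially immediate.

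Concretely, I would proceed as follows. First, substitute the formula for $N_{\alpha_i,\beta_i}(\varepsilon_1,\varepsilon_2)$, $i=1,2$, directly into the ratio on the left-hand side. Second, observe that the $\tfrac12$ factor and the $g(\varepsilon_1,\varepsilon_2)$ factor appear identically in numerator and denominator, so they cancel. What remains is
\[
\frac{N_{\alpha_1,\beta_1}(\varepsilon_1,\varepsilon_2)}{N_{\alpha_2,\beta_2}(\varepsilon_1,\varepsilon_2)} = \left(\frac{z_{\beta_1}+z_{\alpha_1}}{z_{\beta_2}+z_{\alpha_2}}\right)^{2},
\]
which depends only on $\alpha_1,\alpha_2,\beta_1,\beta_2$, so we may take this expression as the constant $c$.

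There is essentially no obstacle: the only mild subtlety is the edge case $\varepsilon_1 = \varepsilon_2$, allowed by the hypothesis $0\leq\varepsilon_1\leq\varepsilon_2\leq1$, in which both sample sizes diverge and the ratio is formally $\infty/\infty$. I would handle this either by restricting the statement to $\varepsilon_1 < \varepsilon_2$ (where the cancellation is rigorous), or by interpreting the ratio as a limit as $\varepsilon_1 \to \varepsilon_2$, in which case the same constant $c$ is recovered since the $g$-factors cancel before the limit is taken. Either way the proof reduces to inspecting the factorization of Eq.~\ref{eq:twoproportions-arcSin}, and no further machinery is required.
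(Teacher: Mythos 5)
Your proof is correct and follows essentially the same route as the paper's: substitute Eq.~\ref{eq:twoproportions-arcSin} into the ratio and cancel the common factor depending on $\varepsilon_1,\varepsilon_2$, leaving $c = \left(\frac{z_{\alpha_1}+z_{\beta_1}}{z_{\alpha_2}+z_{\beta_2}}\right)^{2}$. Your remark on the degenerate case $\varepsilon_1=\varepsilon_2$ is a welcome extra precaution that the paper itself silently skips, but it does not change the argument.
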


\begin{proof}
From Eq.~\ref{eq:twoproportions-arcSin} we have 
$$
\displaystyle{\frac{N_{\alpha_1,\beta_1} (\varepsilon_1,\varepsilon_2)}{N_{\alpha_2,\beta_2} (\varepsilon_1,\varepsilon_2)}} = 
\displaystyle{\left( \frac{z_{\alpha_1} + z_{\beta_1}}{z_{\alpha_2} + z_{\beta_2}} \right)^2}.
$$
\end{proof}

\begin{remark}
Proposition~\ref{prop:scalingProperty} allows us to study the sample size $N_{\alpha,\beta} (\varepsilon_1,\varepsilon_2)$ as estimated by Eq.~\ref{eq:twoproportions-arcSin} for fixed values of $\alpha$ and $\beta$, knowing that any other values of these parameters would only scale the sample size estimates by a constant. Effectively, it allows us to marginalize the variables $\alpha$ and $\beta$, which do not have an obvious natural relation to the concept of bias, and obtain results that depend on $\varepsilon_1$ and $\varepsilon_2$ only. In particular, if for a given pair of values of $\alpha$ and $\beta$ a classifier $C_1$ is deemed less biased on $D$ than a classifier $C_2$, then the same holds for any other values of $\alpha$ and $\beta$. 
\end{remark}

\begin{remark}
We note that scale invariance is a property that many widely used metrics, starting from those based on the Euclidean distance such as the Mean Square Error, do not possess, and the issue is usually addressed through a data normalization whose type depends on the context of the application. Similarly, the proposed measure of bias is not invariant with respect to the values of $\alpha$ and $\beta$. That means that for the following three possible applications of a bias measure: 
\begin{enumerate}[(i)]
    \item Quantify the bias of a classifier $C$ as a function of the error rates on the two demographics $D_1$ and $D_2$. 
    \item Quantify the difference in bias of two classifiers $C_1$ and $C_2$ as a function of their error rates on $D_1$ and $D_2$. 
    \item Rank two classifiers $C_1$ and $C_2$ according to their bias on $D_1$ and $D_2$. 
\end{enumerate}
for (i) and (ii) we obtain partial results, that is, up to a multiplicative constant, and a complete result for (iii). 
\end{remark}

For the rest of the paper we will fix the level of significance at the default value of $\alpha = .95$ and the value of the power of the test at $1-\beta = .90$. These are two commonly used values of the two statistical quantities, even though arbitrary for our purposes. Fig.~\ref{fig:colourmap-p1-p2} shows a colormap of the natural logarithm of the sample size function $N_{\alpha,\beta} (\varepsilon_1,\varepsilon_2)$ in Eq.~\ref{eq:twoproportions-arcSin}, for the default values of $\alpha=.95$ and $\beta=.90$. The logarithmic transformation was applied to compress the range and improve visualization. 
\begin{figure}[ht]
    \centering
    \includegraphics[width=0.85\columnwidth]{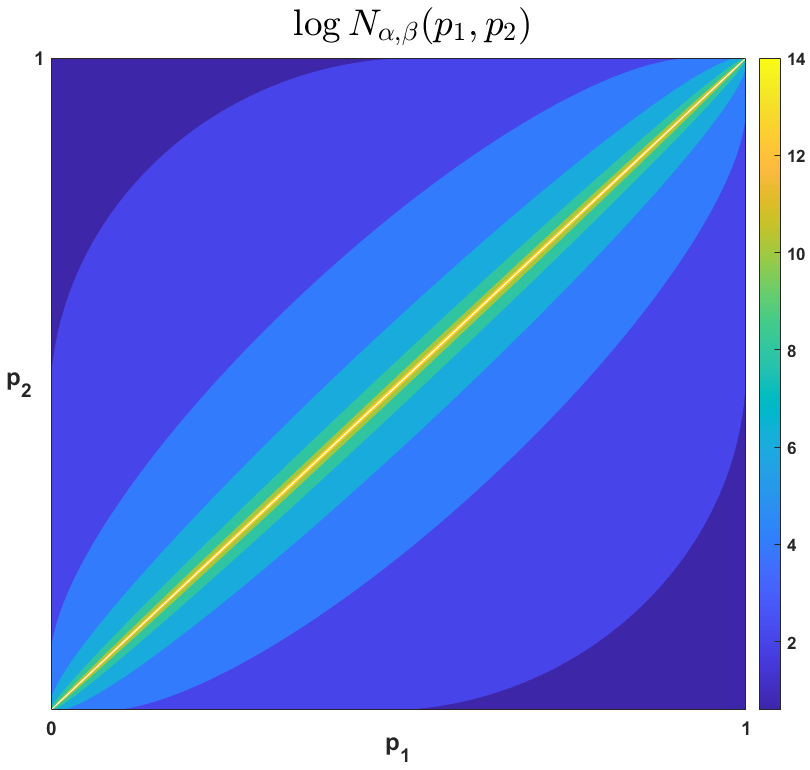}
    \caption{A colormap of the values of $\log N_{\alpha,\beta} (\varepsilon_1,\varepsilon_2)$.}
    \label{fig:colourmap-p1-p2}
\end{figure}

From Fig.~\ref{fig:colourmap-p1-p2}, we notice that the highest values of $N_{\alpha,\beta} (\varepsilon_1,\varepsilon_2)$ are near the diagonal $\varepsilon_1 = \varepsilon_2$. In fact, Proposition~\ref{prop:noBias} states that as $\varepsilon_2 \rightarrow \varepsilon_1$ the value of $N$ goes to infinity, and thus, its inverse $1/N$ goes to 0. This is a desirable property for any measure of bias, because if a classifier has equal error rates over two demographics, then there is no bias over them. 

\begin{proposition}
\label{prop:noBias}
For any $0 < \varepsilon_1 < 1$, we have
$$
\lim_{\varepsilon \rightarrow \varepsilon_1} N_{\alpha_1,\beta_1} (\varepsilon, \varepsilon_1) = \infty.
$$
\end{proposition}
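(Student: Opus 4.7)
The plan is to read off the limit directly from the closed form in Eq.~\ref{eq:twoproportions-arcSin}, observing that only the denominator depends on $\varepsilon$ and reducing the claim to a standard continuity and monotonicity argument for the function $f(x) = \arcsin(\sqrt{x})$ on $(0,1)$.

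First I would fix $\alpha_1, \beta_1$ and regard the numerator $(z_{\alpha_1} + z_{\beta_1})^2 / 2$ as a positive constant $K$, so that
$$
N_{\alpha_1,\beta_1}(\varepsilon, \varepsilon_1) = \frac{K}{\bigl( f(\varepsilon) - f(\varepsilon_1)\bigr)^2}.
$$
Thus it suffices to show that $f(\varepsilon) - f(\varepsilon_1) \to 0$ while remaining nonzero as $\varepsilon \to \varepsilon_1$ with $\varepsilon \neq \varepsilon_1$.

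Next I would establish the two needed properties of $f$ on $(0,1)$. Continuity of $f$ at $\varepsilon_1$ follows because $\sqrt{\cdot}$ is continuous on $[0,1]$ and $\arcsin$ is continuous on $[-1,1]$; hence $f(\varepsilon) \to f(\varepsilon_1)$, i.e.\ the denominator tends to $0$. To ensure the denominator is strictly nonzero for $\varepsilon \neq \varepsilon_1$ in some punctured neighborhood, I would note that $f$ is strictly increasing on $(0,1)$, as the composition of two strictly increasing functions (the square root on $(0,1)$, whose image lies in $(0,1) \subset (-1,1)$ where $\arcsin$ is strictly increasing). Consequently $f(\varepsilon) \neq f(\varepsilon_1)$ for every $\varepsilon \neq \varepsilon_1$ in $(0,1)$.

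Combining these, the denominator $\bigl(f(\varepsilon) - f(\varepsilon_1)\bigr)^2$ is strictly positive for $\varepsilon \neq \varepsilon_1$ and tends to $0$ as $\varepsilon \to \varepsilon_1$, while the numerator $K$ is a fixed positive constant. Hence $N_{\alpha_1,\beta_1}(\varepsilon,\varepsilon_1) \to \infty$, as required. There is no real obstacle here — the assumption $0 < \varepsilon_1 < 1$ is exactly what keeps us safely inside the domain where $f$ is continuous and strictly monotone, so the argument is essentially a one-line observation once the formula is unpacked.
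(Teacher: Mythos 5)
Your proof is correct and follows essentially the same route as the paper's own argument: read the limit off Eq.~\ref{eq:twoproportions-arcSin}, use continuity of $\arcsin(\sqrt{\cdot})$ so the denominator tends to $0$, and conclude $N \to \infty$. The only addition is your explicit check, via strict monotonicity, that the denominator is nonzero for $\varepsilon \neq \varepsilon_1$, a point the paper leaves implicit.
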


\begin{proof}
Since square root and arcsin are continuous in (0,1), we have 
$$
\lim_{\varepsilon \rightarrow \varepsilon_1} (\arcsin(\sqrt{\varepsilon})-\arcsin(\sqrt{\varepsilon_1})) = 0 
$$
giving, 
$$
\lim_{\varepsilon \rightarrow \varepsilon_1} \displaystyle{\frac{1}{2}} \left( \displaystyle{\frac{z_\beta + z_\alpha}{\arcsin(\sqrt{\varepsilon})-\arcsin(\sqrt{\varepsilon_1})}} \right)^2 = \infty
$$
\end{proof}



Proposition~\ref{prop:biasMonotonicity} states that if the error on the demographic where the algorithm is more accurate is kept constant and the error on the other demographic increases, then the bias increases. This is another desirable property that any reasonable bias metric should have.

\begin{proposition}
\label{prop:biasMonotonicity}
For every $0 \leq \varepsilon_1 \leq \varepsilon_2 \leq \varepsilon_3 \leq 1$ 
\begin{equation}
N_{\alpha,\beta} (\varepsilon_1, \varepsilon_3) \leq N_{\alpha,\beta} (\varepsilon_1, \varepsilon_2)    
\end{equation}
\end{proposition}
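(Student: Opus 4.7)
The plan is to exploit the fact that $N_{\alpha,\beta}(\varepsilon_1,\varepsilon_2)$ depends on the pair $(\varepsilon_1,\varepsilon_2)$ only through the quantity $\bigl(\arcsin(\sqrt{\varepsilon_1}) - \arcsin(\sqrt{\varepsilon_2})\bigr)^2$ in the denominator of Eq.~\ref{eq:twoproportions-arcSin}, with the numerator $(z_\beta + z_\alpha)^2$ being a constant once $\alpha,\beta$ are fixed. So the inequality reduces to a monotonicity statement about the composition $f(x) = \arcsin(\sqrt{x})$ on $[0,1]$.

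First I would observe that $\sqrt{\cdot}$ is strictly increasing on $[0,1]$ and $\arcsin$ is strictly increasing on $[0,1]$, so $f$ is itself strictly increasing on $[0,1]$. Applying this to $\varepsilon_1 \leq \varepsilon_2 \leq \varepsilon_3$ yields $f(\varepsilon_1) \leq f(\varepsilon_2) \leq f(\varepsilon_3)$, from which
\[
0 \;\leq\; f(\varepsilon_2) - f(\varepsilon_1) \;\leq\; f(\varepsilon_3) - f(\varepsilon_1).
\]
Squaring this chain of non-negative quantities preserves the inequality, so the denominator appearing in $N_{\alpha,\beta}(\varepsilon_1,\varepsilon_3)$ is at least as large as the one appearing in $N_{\alpha,\beta}(\varepsilon_1,\varepsilon_2)$. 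Taking reciprocals reverses the order and, multiplying by the positive common factor $\tfrac{1}{2}(z_\beta + z_\alpha)^2$, gives exactly the stated inequality.

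The only mildly delicate point is the boundary behavior when some of the $\varepsilon_i$ coincide, since then the denominator vanishes and $N$ is formally undefined in Eq.~\ref{eq:twoproportions-arcSin}. If $\varepsilon_1 = \varepsilon_2$, Proposition~\ref{prop:noBias} (applied as a limit) already tells us $N_{\alpha,\beta}(\varepsilon_1,\varepsilon_2) = +\infty$, so the inequality holds trivially regardless of $\varepsilon_3$. The same argument covers the case $\varepsilon_1 = \varepsilon_3$, which forces $\varepsilon_1 = \varepsilon_2 = \varepsilon_3$ and both sides equal $+\infty$. Thus the substantive content is confined to the strict-inequality case $\varepsilon_1 < \varepsilon_2 \leq \varepsilon_3$, where the monotonicity argument above applies directly with no further analysis.

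Overall I do not expect any real obstacle here: the proposition is essentially a one-line consequence of Eq.~\ref{eq:twoproportions-arcSin} and the monotonicity of $\arcsin \circ \sqrt{\cdot}$, and the only care needed is to state the convention $1/0 = +\infty$ so that the boundary equality cases are handled uniformly with the interior.
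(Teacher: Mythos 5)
Your argument is correct and follows essentially the same route as the paper's own proof: both reduce the inequality via Eq.~\ref{eq:twoproportions-arcSin} to the monotonicity of $\arcsin(\sqrt{x})$ on $[0,1]$. If anything, your version is slightly more careful than the paper's, since you keep the differences nonnegative before squaring (avoiding the sign sloppiness in the paper's displayed inequalities) and explicitly handle the degenerate cases where the denominator vanishes.
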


\begin{proof}
From the formula for $N_{\alpha,\beta}$ in Eq.~\ref{eq:twoproportions-arcSin}, it suffices to show 
$$
\arcsin(\sqrt{\varepsilon_1})-\arcsin(\sqrt{\varepsilon_2}) \ \leq\ 
\arcsin(\sqrt{\varepsilon_1})-\arcsin(\sqrt{\varepsilon_3})
$$
or, 
$$
\arcsin(\sqrt{\varepsilon_3}) \leq \arcsin(\sqrt{\varepsilon_2})
$$
which holds because $\arcsin(x)$ and thus $\arcsin(\sqrt{x})$ increase monotonically in [0,1].     
\end{proof}

\section{Comparison with with ratios and differences} 
\label{sec:sec4}

In this section, we compare the proposed bias metric with two other natural bias metrics. The first is the difference of the error rates on the two demographics 
\begin{equation}
D_{err}(\varepsilon_1, \varepsilon_2) = \varepsilon_2 - \varepsilon_1    
\label{eq:difference}
\end{equation}
and the second is the ratio of the error rates 
\begin{equation}
R_{err}(\varepsilon_1, \varepsilon_2) = \displaystyle{\frac{\varepsilon_2}{\varepsilon_1}}
\label{eq:errorRateRatio}
\end{equation}
For simplicity of exposition, we do not consider the direction of bias and assume that $\varepsilon_1 \leq \varepsilon_2$, that is, the error of the classifier is lower on demographic $D_1$. 


\subsection{Ranking comparisons}
\label{sec:sec3-1}

First, we show that the three approaches to measuring bias are essentially different, in the sense that they can yield different bias rankings of classification algorithms. We use the data reported in \cite{wu2023}. There, five in total classifiers for the face recognition problem, corresponding to different combinations of loss, model, and training set, are tested on a set comprising four races (Asian, Black, Indian, White) and two genders (Female, Male). TPRs on each demographic are reported, as well as the differences of the TPRs between Female and Male, for each race separately. 

Table~\ref{tbl:realExample} shows the TPRs of each algorithm on each demographic as reported in \cite{wu2023}; the differences of the TPRs employed in that paper as a measure of bias, and for comparison the sample sizes computed from \ref{eq:twoproportions-arcSin}. The different rankings produced by the two bias metrics are highlighted with a color scheme. 

\begin{table*}[ht]
\begin{center}
\begin{tabular}{c||c|c||c|c|c||c|c|c} 
\rule{0pt}{1em} \cellcolor{black!18}{\bf Asian} & AF & AM & $D_{err}$ & $R_{err}$ & $N$ & $D_{err}\downarrow$ & $R_{err}\downarrow$\ & \ \ $N\uparrow$\ \ \\ \hline
alg.1 & 65.00 & 79.78 & 14.78 & 1.73 & 154 & \cellcolor{red!50}\bf{5} & \cellcolor{green!50}\bf{1} & \cellcolor{green!20}\bf{2} \\ \hline 
alg.2 & 81.56 & 94.44 & 12.89 & 3.32 & 101 & \cellcolor{red!20}\bf{4} & \cellcolor{red!50}\bf{5} & \cellcolor{red!50}\bf{5} \\ \hline 
alg.3 & 81.56 & 93.11 & 11.56 & 2.68 & 135 & \cellcolor{green!20}\bf{2} & \cellcolor{green!20}\bf{2} & \cellcolor{yellow!35}{\bf 3} \\ \hline 
alg.4 & 81.22 & 93.00 & 11.78 & 2.68 & 131 & \cellcolor{yellow!35}{\bf 3} & \cellcolor{yellow!35}{\bf 3} & \cellcolor{red!20}\bf{4} \\ \hline 
alg.5 & 90.00 & 96.78 &  6.78 & 3.11 & 214 & \cellcolor{green!50}\bf{1} & \cellcolor{red!20}\bf{4} & \cellcolor{green!50}\bf{1} \\ 
\end{tabular} 
\vskip 12pt
\end{center} 
\begin{center}
\begin{tabular}{c||c|c||c|c|c||c|c|c} 
\rule{0pt}{1em} \cellcolor{black!18}{\bf Black} & BF & BM & $D_{err}$ & $R_{err}$ & $N$ & $D_{err}\downarrow$ & $R_{err}\downarrow$\ & \ \ $N\uparrow$\ \ \\ \hline
alg.1 & 85.56 & 86.78 & 1.22 & 1.09 & 13708 & \cellcolor{yellow!35}{\bf 3} & \cellcolor{green!20}\bf{2} & \cellcolor{yellow!35}{\bf 3} \\ \hline 
alg.2 & 91.00 & 94.22 & 3.22 & 1.56 & 1118 & \cellcolor{red!50}\bf{5} & \cellcolor{red!50}\bf{5} & \cellcolor{red!50}\bf{5} \\ \hline 
alg.3 & 91.56 & 94.11 & 2.56 & 1.43 & 1739 & \cellcolor{red!20}\bf{4} & \cellcolor{red!20}\bf{4} & \cellcolor{red!20}\bf{4} \\ \hline 
alg.4 & 93.44 & 93.78 & 0.33 & 1.05 & 88612 & \cellcolor{green!20}\bf{2} & \cellcolor{green!50}\bf{1} & \cellcolor{green!50}\bf{1} \\ \hline 
alg.5 & 98.00 & 97.67 & 0.33 & 1.17 & 33266 & \cellcolor{green!50}\bf{1} & \cellcolor{yellow!35}{\bf 3} & \cellcolor{green!20}\bf{2} \\ 
\end{tabular}
\end{center} 
\vskip 12pt
\begin{center}
\begin{tabular}{c||c|c||c|c|c||c|c|c} 
\rule{0pt}{1em} \cellcolor{black!18}{\bf Indian} & IF & IM & $D_{err}$ & $R_{err}$ & $N$ & $D_{err}\downarrow$ & $R_{err}\downarrow$\ & \ \ $N\uparrow$\ \ \\ \hline
alg.1 & 86.78 & 90.78 & 4.00 & 1.43 & 1058 & \cellcolor{red!50}\bf{5} & \cellcolor{red!20}\bf{4} & \cellcolor{red!50}\bf{5} \\ \hline 
alg.2 & 96.00 & 96.11 & 0.11 & 1.03 & 536364 & \cellcolor{green!50}\bf{1} & \cellcolor{green!50}\bf{1} & \cellcolor{green!50}\bf{1} \\ \hline 
alg.3 & 94.56 & 95.56 & 1.00 & 1.23 & 8023 & \cellcolor{green!20}\bf{2} & \cellcolor{green!20}\bf{2} & \cellcolor{green!20}\bf{2} \\ \hline 
alg.4 & 94.89 & 93.67 & 1.22 & 1.24 & 6189 & \cellcolor{yellow!35}{\bf 3} & \cellcolor{yellow!35}{\bf 3} & \cellcolor{yellow!35}{\bf 3} \\ \hline 
alg.5 & 98.56 & 97.22 & 1.33 & 1.93 & 1920 & \cellcolor{red!20}\bf{4} & \cellcolor{red!50}\bf{5} & \cellcolor{red!20}\bf{4} \\ 
\end{tabular}
\end{center} 
\vskip 12pt
\begin{center}
\begin{tabular}{c||c|c||c|c|c||c|c|c} 
\rule{0pt}{1em} \cellcolor{black!18}{\bf White} & WF & WM & $D_{err}$ & $R_{err}$ & $N$ & $D_{err}\downarrow$ & $R_{err}\downarrow$\ & \ \ $N\uparrow$\ \ \\ \hline
alg.1 & 76.67 & 86.22 & 9.56 & 1.69 & 279 & \cellcolor{red!50}\bf{5} & \cellcolor{green!50}\bf{1} & \cellcolor{yellow!35}{\bf 3} \\ \hline 
alg.2 & 89.44 & 96.56 & 7.11 & 3.07 & 205 & \cellcolor{red!20}\bf{4} & \cellcolor{yellow!35}{\bf 3} & \cellcolor{red!20}\bf{4} \\ \hline 
alg.3 & 90.11 & 97.11 & 7.00 & 3.42 & 193 & \cellcolor{yellow!35}{\bf 3} & \cellcolor{red!20}\bf{4} & \cellcolor{red!50}\bf{5} \\ \hline 
alg.4 & 92.67 & 95.78 & 3.11 & 1.74 & 946 & \cellcolor{green!20}\bf{2} & \cellcolor{green!20}\bf{2} & \cellcolor{green!50}\bf{1} \\ \hline 
alg.5 & 95.78 & 98.78 & 3.00 & 3.46 & 462 & \cellcolor{green!50}\bf{1} & \cellcolor{red!50}\bf{5} & \cellcolor{green!20}\bf{2} \\ 
\end{tabular}
\end{center} 
\caption{The TPRs of five algorithms on four races (Asian, Black, Indian, and White) and two genders (female in column 2, male in column 3). We report $D_{err}$, $R_{err}$, and $N$ (columns 4-6) and rank the algorithms for bias according to these three bias metrics (columns 7-9).}
\label{tbl:realExample}
\end{table*} 


\subsection{Behavior under constant error ratios or differences}
\label{sec:sec4-2}

For a better insight into the differences between the proposed metric and error differences or ratios, we study the behavior of the former as the latter remain constant. First, we have: 

\begin{proposition}
\label{prop:constantDiff}
Let $0\leq \varepsilon_1 \leq \varepsilon_2 \leq 1/2$, and $d\in (0, 1/2 - \varepsilon_2)$. We have
$$
N_{\alpha,\beta} (\varepsilon_1 + d, \varepsilon_2 + d) \leq
N_{\alpha,\beta} (\varepsilon_1, \varepsilon_2). 
$$
\end{proposition}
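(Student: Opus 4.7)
The plan is to reduce the claim to a single-variable monotonicity question about $\arcsin\!\circ\!\sqrt{\,\cdot\,}$. By Eq.~\ref{eq:twoproportions-arcSin}, the constant $\tfrac{1}{2}(z_\alpha+z_\beta)^2$ is unaffected by the shift, so the inequality on $N_{\alpha,\beta}$ is equivalent to a comparison between $|\arcsin(\sqrt{\varepsilon_2+d})-\arcsin(\sqrt{\varepsilon_1+d})|$ and $|\arcsin(\sqrt{\varepsilon_2})-\arcsin(\sqrt{\varepsilon_1})|$. Since $\varepsilon_1\leq\varepsilon_2$ and $\arcsin\!\circ\!\sqrt{\,\cdot\,}$ is monotonically increasing on $[0,1]$ (as already noted in the proof of Proposition~\ref{prop:biasMonotonicity}), both absolute values can be dropped, and one only has to compare the two (nonnegative) differences directly.

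Next I would introduce the auxiliary function
$$
f(t) = \arcsin(\sqrt{\varepsilon_2+t}) - \arcsin(\sqrt{\varepsilon_1+t}), \qquad t\in[0,\,1/2-\varepsilon_2],
$$
so that everything reduces to comparing $f(d)$ and $f(0)$. A direct differentiation, using $(\arcsin\!\circ\!\sqrt{\,\cdot\,})'(x) = \tfrac{1}{2\sqrt{x(1-x)}}$, gives
$$
f'(t) = \frac{1}{2\sqrt{(\varepsilon_2+t)(1-\varepsilon_2-t)}} - \frac{1}{2\sqrt{(\varepsilon_1+t)(1-\varepsilon_1-t)}}.
$$
Thus the sign of $f'(t)$ is governed by comparing $g(\varepsilon_1+t)$ and $g(\varepsilon_2+t)$, where $g(x)=x(1-x)$.

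The main step, and the place where the hypothesis $\varepsilon_2\leq 1/2$ is essential, is this sign analysis. The parabola $g(x)=x(1-x)$ is strictly increasing on $[0,1/2]$ and peaks at $x=1/2$; under the hypotheses of the proposition, both $\varepsilon_1+t$ and $\varepsilon_2+t$ remain in $[0,1/2]$ throughout the relevant range of $t$, so $g$ evaluated at the larger argument is the larger value. Once the sign of $f'(t)$ is pinned down on $[0,d]$, the monotonic behavior of $f$ transfers, via the closed-form expression in Eq.~\ref{eq:twoproportions-arcSin}, to the desired inequality between $N_{\alpha,\beta}(\varepsilon_1+d,\varepsilon_2+d)$ and $N_{\alpha,\beta}(\varepsilon_1,\varepsilon_2)$.

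The main obstacle I anticipate is purely bookkeeping: keeping track of the correct direction of the inequality through the chain of reductions (absolute values, reciprocals, squaring), together with the fact that $N_{\alpha,\beta}$ is inversely proportional to the square of $f$, means there are several sign flips to track. The upper bound $\varepsilon_2\leq 1/2$ is sharp for the argument, since once $\varepsilon_2+t$ crosses $1/2$ the monotonicity of $g$ reverses; relaxing the hypothesis would require exploiting the symmetry $g(x)=g(1-x)$ and handling the two regimes separately.
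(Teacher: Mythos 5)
Your reduction and derivative computation are correct and essentially identical to the paper's own argument: the paper proves that $x \mapsto \arcsin(\sqrt{x})-\arcsin(\sqrt{x+d})$ is monotone by reducing the sign of its derivative to the comparison $(1-x)x < (1-x-d)(x+d)$ on $[0,1/2]$, which is exactly your comparison of $g(\varepsilon_1+t)$ and $g(\varepsilon_2+t)$ with $g(x)=x(1-x)$ increasing on $[0,1/2]$; parameterizing along the shift $t$ rather than along the base point $x$ is an immaterial difference.

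The genuine gap is the sign-tracking you defer as ``bookkeeping'': it is precisely where the argument fails to deliver the stated inequality. Carrying it out, $g(\varepsilon_1+t)\le g(\varepsilon_2+t)$ gives $f'(t)\le 0$, hence $f(d)\le f(0)$, i.e.\ the transformed gap $\arcsin(\sqrt{\varepsilon_2+t})-\arcsin(\sqrt{\varepsilon_1+t})$ \emph{shrinks} under the shift. Since $N_{\alpha,\beta}$ in Eq.~\ref{eq:twoproportions-arcSin} is inversely proportional to the square of this gap, a smaller gap means a larger sample size, so what your computation actually establishes is $N_{\alpha,\beta}(\varepsilon_1+d,\varepsilon_2+d)\ \ge\ N_{\alpha,\beta}(\varepsilon_1,\varepsilon_2)$ --- the reverse of the displayed claim. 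This is not a defect of your method: the statement as printed is contradicted by the paper's own illustration in Table~\ref{tbl:example-constantDiff} (with $d=0.1$, $N(0,0.1)=42$, $N(0.1,0.2)=213$, $N(0.2,0.3)=319$, so shifting both rates up increases $N$) and by the surrounding discussion, and the paper's own proof makes the same silent direction flip in its opening ``it suffices to show'' line before proving the gap inequality that yields the $\ge$ version. So you should either prove and state the $\ge$ inequality explicitly, or flag the reversed sign in the proposition; asserting that the monotonicity of $f$ ``transfers to the desired inequality'' without fixing the direction leaves the proof incomplete, and completing it honestly contradicts the claim as stated.
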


\begin{proof}
From the formula for $N_{\alpha,\beta}$ in Eq.~\ref{eq:twoproportions-arcSin}, it suffices to show that 
$$
\arcsin(\sqrt{\varepsilon_1})-\arcsin(\sqrt{\varepsilon_2}) \leq  \arcsin(\sqrt{\varepsilon_1 + d})-\arcsin(\sqrt{\varepsilon_2 + d})
$$
or, 
$$
\arcsin(\sqrt{\varepsilon_1}) - \arcsin(\sqrt{\varepsilon_1 + d}) \leq \arcsin(\sqrt{\varepsilon_2}) - \arcsin(\sqrt{\varepsilon_2 + d}) 
$$
or, that 
$$
f(x) = \arcsin(\sqrt{x})-\arcsin(\sqrt{x + d})
$$
is increasing function of $x$ for $0\leq x \leq 1/2$ and $d\in (0, 1/2 - x)$. For that, it suffices to show that the derivative of $f$ with respect to $x$ is positive, that is,
$$
0 < f'(x) = \frac{1}{2 \sqrt{1 - x} \sqrt{x}} - \frac{1}{2 \sqrt{-x - d + 1} \sqrt{x + d}} 
$$
giving, 
$$
\frac{1}{2 \sqrt{-x - d + 1} \sqrt{x + d}} < \frac{1}{2 \sqrt{1 - x} \sqrt{x}} 
$$
or, 
$$
2 \sqrt{1 - x} \sqrt{x} < 2 \sqrt{-x - d + 1} \sqrt{x + d}
$$
or, 
$$
(1-x)x < (1-x-d) (x+d)
$$
which holds, as on both sides the sum of the two factors is 1, while the absolute value of their difference is $1-2x$ on the left hand side, larger than the $1-2x-2d$ on the right hand side. 
\end{proof}

Proposition~\ref{prop:constantDiff} highlights an important difference between the proposed metric and $D_{err}$. As we keep the bias measured by $D_{err}$ constant, the bias measured by $N_{\alpha,\beta}$ increases with the accuracy of the algorithm. Table~\ref{tbl:example-constantDiff} illustrates this property for $c=0.10$. 

\begin{table}[ht]
\begin{center}
\begin{tabular}{c|c|c|c|c||c|c} 
\rule{0pt}{1em} & $\varepsilon_1$ & $\varepsilon_2$ & $D_{err}$ & $N$ & $D_{err}\downarrow$ & \ \ $N\uparrow$\ \ \\ \hline
alg.1 & 0.20 & 0.30 & 0.10 & 319 & \bf{=} & \cellcolor{green!50}\bf{1} \\ \hline 
alg.2 & 0.10 & 0.20 & 0.10 & 213 & \bf{=} & \cellcolor{yellow!35}\bf{2} \\ \hline 
alg.3 & 0.00 & 0.10 & 0.10 & 42 & \bf{=} & \cellcolor{red!50}\bf{3} \\ \hline 
\end{tabular} 
\end{center}
\caption{Illustration of Proposition~\ref{prop:constantDiff}. The difference $D_{err} = \varepsilon_2 - \varepsilon_1$ is constant at 0.10. The value of $N$ decreases as $\varepsilon_1, \varepsilon_2$ decrease.}
\label{tbl:example-constantDiff}
\end{table}


Proposition~\ref{prop:constantRatio} is the equivalent of Proposition~\ref{prop:constantDiff}. This time, the error ratios are kept constant, and unlike Proposition~\ref{prop:constantRatio}, the bias as measured by the proposed metric decreases as the algorithms become more accurate. 

\begin{proposition}
\label{prop:constantRatio}

Let $0\leq \varepsilon_1 \leq \varepsilon_2 \leq 1/2$, and $r\in (1, 1 / (2\varepsilon_2))$. We have
$$
N_{\alpha,\beta} (\varepsilon_1, \varepsilon_2) \leq N_{\alpha,\beta} (r\varepsilon_1, r\varepsilon_2).    
$$

\begin{proof}
From the formula for $N_{\alpha,\beta}$ in Eq.~\ref{eq:twoproportions-arcSin}, it suffices to show that 
$$
\arcsin(\sqrt{r\varepsilon_1})-\arcsin(\sqrt{r\varepsilon_2}) \leq
\arcsin(\sqrt{\varepsilon_1})-\arcsin(\sqrt{\varepsilon_2}) 
$$
or, 
$$
\arcsin(\sqrt{\varepsilon_2}) - \arcsin(\sqrt{r\varepsilon_2})
\leq
\arcsin(\sqrt{\varepsilon_1}) - \arcsin(\sqrt{r\varepsilon_1}) 
$$
or, that 
$$
f(x) = \arcsin(\sqrt{x})-\arcsin(\sqrt{rx})
$$
is an decreasing function of $x$ for $0\leq x \leq 1/2$ and $r\in (1, 1/(2x))$. It suffices to show that the derivative of $f$ with respect to $x$ is negative 
$$
f'(x) = \frac{1}{2 \sqrt{1 - x} \sqrt{x}} - \frac{r}{2 \sqrt{rx} \sqrt{1 - rx}} < 0 
$$
or, 
$$
\frac{1}{2 \sqrt{1 - x} \sqrt{x}} < \frac{r}{2 \sqrt{rx} \sqrt{1 - rx}}
$$ 
or, after the simplifications 
$$
\sqrt{1-rx} < \sqrt{r} \sqrt{1-x}
$$
or, 
$$\sqrt{1-rx} < \sqrt{r-rx}$$ 
which holds for $1 < r < 1/2x$. 
\end{proof}
\end{proposition}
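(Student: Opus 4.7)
The plan is to mirror the strategy used in the proof of Proposition~\ref{prop:constantDiff}. First, I would invoke Eq.~\ref{eq:twoproportions-arcSin} and observe that, since the numerator $z_\alpha + z_\beta$ is the same on both sides, the claimed inequality $N_{\alpha,\beta}(\varepsilon_1,\varepsilon_2) \leq N_{\alpha,\beta}(r\varepsilon_1, r\varepsilon_2)$ is equivalent to
$$
|\arcsin(\sqrt{r\varepsilon_1}) - \arcsin(\sqrt{r\varepsilon_2})| \leq |\arcsin(\sqrt{\varepsilon_1}) - \arcsin(\sqrt{\varepsilon_2})|.
$$
Since $\arcsin \circ \sqrt{\cdot}$ is monotone increasing, both sides keep a definite sign and the absolute values can be dropped after swapping the order of the terms. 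Rearranging, the inequality becomes $f(\varepsilon_2) \leq f(\varepsilon_1)$, where
$$
f(x) = \arcsin(\sqrt{x}) - \arcsin(\sqrt{rx}).
$$
Since $\varepsilon_1 \leq \varepsilon_2$, it thus suffices to prove that $f$ is decreasing on the admissible domain $[0, 1/(2r)]$, which under the hypothesis $r \in (1, 1/(2\varepsilon_2))$ contains both $\varepsilon_1$ and $\varepsilon_2$.

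Next, I would show $f'(x) < 0$ by direct differentiation. Using $(\arcsin\sqrt{u})' = 1/(2\sqrt{u(1-u)})$ together with the chain rule for the $rx$ argument yields
$$
f'(x) = \frac{1}{2\sqrt{x(1-x)}} - \frac{\sqrt{r}}{2\sqrt{x(1-rx)}}.
$$
Clearing the common factor $1/(2\sqrt{x})$, the condition $f'(x) < 0$ reduces to $\sqrt{1-rx} < \sqrt{r}\,\sqrt{1-x} = \sqrt{r - rx}$. Both sides are strictly positive on the admissible interval, so squaring is legitimate and gives the equivalent condition $1 - rx < r - rx$, i.e., $1 < r$, which holds by hypothesis. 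This closes the monotonicity argument and therefore the proposition.

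The main obstacle is not any deep estimate but rather keeping the direction of the inequality straight through the reformulation, and checking that all radicands remain strictly positive on the stated domain; this is exactly what the upper bound $r < 1/(2\varepsilon_2)$ is designed to ensure, since it guarantees $r\varepsilon_2 < 1/2$ and hence $1 - rx > 0$ throughout. Once these bookkeeping points are in place, the calculus is routine and the argument is essentially symmetric, in form though not in conclusion, to the one used for Proposition~\ref{prop:constantDiff}.
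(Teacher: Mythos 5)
Your route is the same as the paper's: reduce via Eq.~\ref{eq:twoproportions-arcSin} to a comparison of arcsine gaps, introduce $f(x)=\arcsin(\sqrt{x})-\arcsin(\sqrt{rx})$, and show $f'(x)<0$, which after clearing the common factor $1/(2\sqrt{x})$ and squaring comes down to $1<r$; your handling of the radicands and of the role of the bound $r<1/(2\varepsilon_2)$ is, if anything, more careful than the paper's.

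The one genuine problem is the direction of the rearrangement, precisely the point you flagged as delicate. Your first equivalence is correct: since $N_{\alpha,\beta}$ is proportional to the inverse square of the arcsine gap, the displayed claim $N_{\alpha,\beta}(\varepsilon_1,\varepsilon_2)\le N_{\alpha,\beta}(r\varepsilon_1,r\varepsilon_2)$ is equivalent to
$$
\arcsin(\sqrt{r\varepsilon_2})-\arcsin(\sqrt{r\varepsilon_1})\ \le\ \arcsin(\sqrt{\varepsilon_2})-\arcsin(\sqrt{\varepsilon_1}),
$$
and rearranging this gives $f(\varepsilon_1)\le f(\varepsilon_2)$, not $f(\varepsilon_2)\le f(\varepsilon_1)$: the inequality as displayed in the proposition requires $f$ to be \emph{increasing}. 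Since you correctly prove that $f$ is strictly decreasing, what your argument actually establishes is the reverse inequality, $N_{\alpha,\beta}(r\varepsilon_1,r\varepsilon_2)\le N_{\alpha,\beta}(\varepsilon_1,\varepsilon_2)$. You should know that this reversed inequality is the one the paper itself supports: it is what Table~\ref{tbl:example-constantRatio} shows ($N$ grows as the error rates shrink at constant ratio), it is what the surrounding text asserts, and the paper's own proof makes the same reversal in its opening ``it suffices to show'' step, so the inequality displayed in the statement of Proposition~\ref{prop:constantRatio} is best read as a misprint. To make your write-up internally consistent, either keep your (correct) reduction, conclude the reversed inequality, and note the misprint, or correct the rearranged inequality to $f(\varepsilon_1)\le f(\varepsilon_2)$ and observe that the decreasing monotonicity you establish contradicts, rather than proves, the statement as literally written.
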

Table~\ref{tbl:example-constantRatio} illustrates the property described by the Proposition~\ref{prop:constantRatio} for $r=2$. 
\begin{table}[ht]
\begin{center}
\begin{tabular}{c|c|c|c|c||c|c} 
\rule{0pt}{1em} & $\varepsilon_1$ & $\varepsilon_2$ & $R_{err}$ & $N$ & $R_{err}\downarrow$ & \ \ $N\uparrow$\ \ \\ \hline
alg.1 & 0.20 & 0.40 & 2.00 & 88 & \bf{=} & \cellcolor{red!50}\bf{3} \\ \hline 
alg.2 & 0.10 & 0.20 & 2.00 & 213 & \bf{=} & \cellcolor{yellow!35}\bf{2} \\ \hline 
alg.3 & 0.05 & 0.10 & 2.00 & 463 & \bf{=} & \cellcolor{green!50}\bf{1} \\ \hline 
\end{tabular} 
\end{center}
\caption{Illustration of Proposition~\ref{prop:constantRatio}. The ratio of the erros $R_{err} = \varepsilon_2 / \varepsilon_1$ is constant at 2. The value of $N$ increases as $\varepsilon_1, \varepsilon_2$ decrease.}
\label{tbl:example-constantRatio}
\end{table}


\subsection{Bias metric behavior for diminishing error values}

As algorithms continuously improve, the evolution of the state-of-the-art in a classification problem corresponds to a sequence of increasingly accurate algorithms, with errors that, ideally, would tend to zero on both demographics:
$$
(\varepsilon_1, \varepsilon_2) \rightarrow (0,0).
$$ 
For an insight into the behavior of the three metrics near $(0,0)$, Figure~\ref{fig:contours} shows their contours as functions of $\varepsilon_1, \varepsilon_2$. We notice that the error ratio suffers from a singularity in the vicinity of $(0,0)$, meaning that it is an unstable bias metric for highly accurate algorithms. Indeed, its value depends on the direction from which $(0,0)$ is approached and in fact, it is constant along each such direction.  
\begin{figure*}
\begin{center}
        \includegraphics[width=0.33\textwidth]{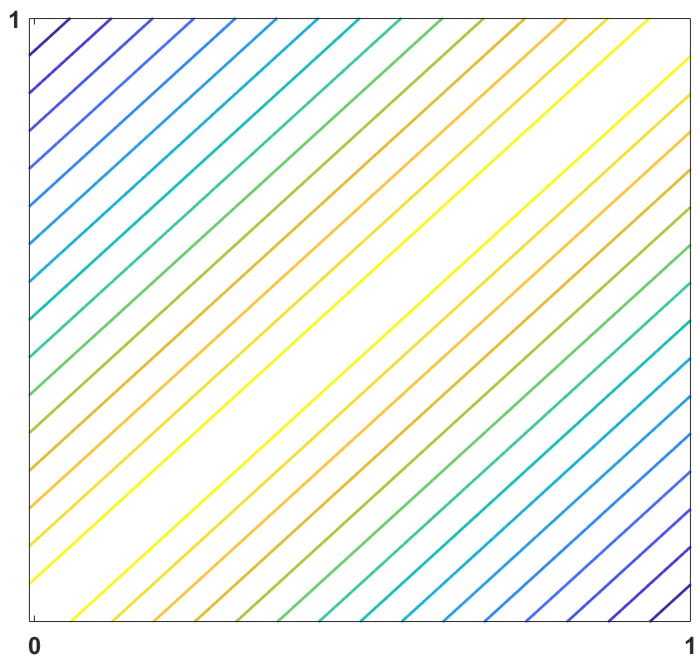} \hfill
        \includegraphics[width=0.33\textwidth]{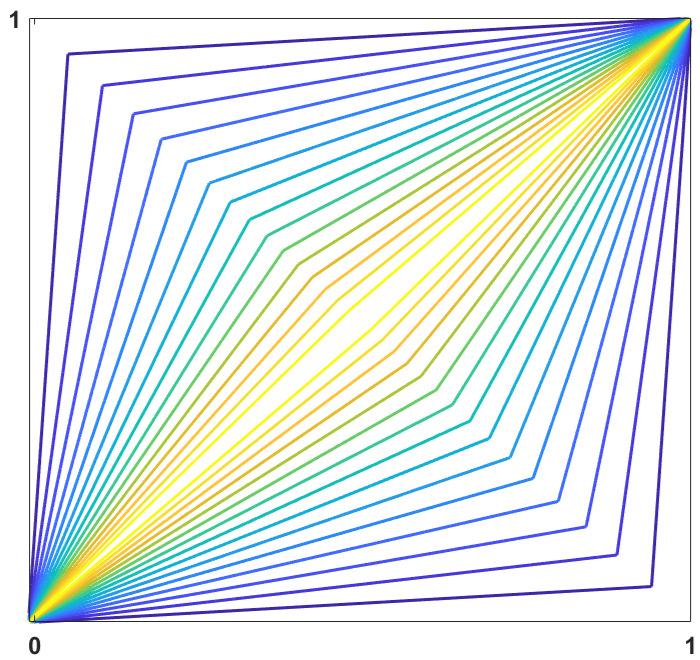} \hfill
        \includegraphics[width=0.33\textwidth]{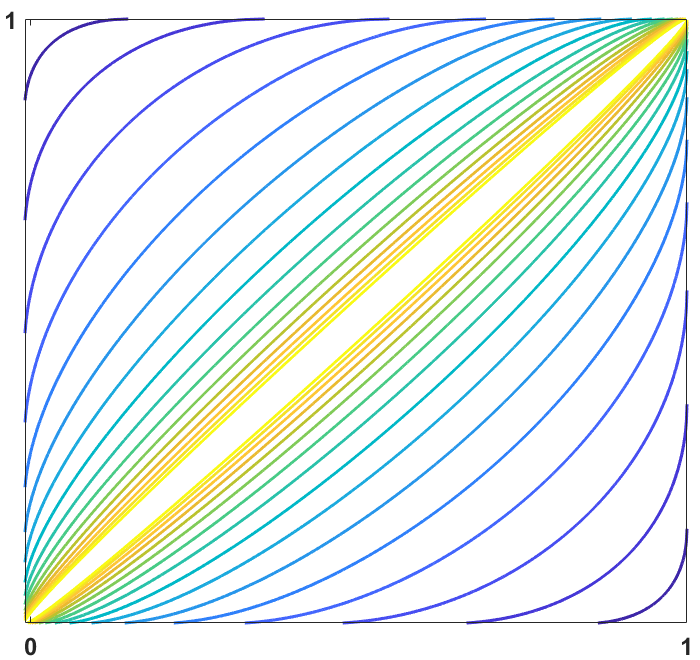} 
\end{center}
\caption{From left to right: the contours of the functions $D_{err}(\varepsilon_1, \varepsilon_2) = \varepsilon_2 - \varepsilon_1$, 
$R_{err}(\varepsilon_1, \varepsilon_2) = \varepsilon_2 / \varepsilon_1$, and $N_{\alpha,\beta} (\varepsilon_1,\varepsilon_2)$ in Eq.~\ref{eq:twoproportions-arcSin}.}
\label{fig:contours}
\end{figure*}

Proposition~\ref{prop:proposedLimitZero} shows that the sample size metric does not suffer from a singularity at $(0,0)$. That is, similarly to the error differences, and unlike the error ratios, the proposed metric does have a limit at $(0,0)$. 

\begin{proposition}
\label{prop:proposedLimitZero}
Let $0 \leq \varepsilon_1 \leq \varepsilon_2$. We have 
$$
\lim_{\varepsilon_1, \varepsilon_2 \rightarrow 0^+} N_{\alpha,\beta} (\varepsilon_1, \varepsilon_2) = \infty
$$
\end{proposition}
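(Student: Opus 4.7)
The plan is to repeat the argument of Proposition~\ref{prop:noBias}, replacing the one-variable limit by a joint limit. First I would note that $\sqrt{\cdot}$ and $\arcsin$ are both continuous at $0$ with value $0$, so as $\varepsilon_1,\varepsilon_2 \to 0^+$, each of $\arcsin(\sqrt{\varepsilon_1})$ and $\arcsin(\sqrt{\varepsilon_2})$ tends to $0$ individually. By the triangle inequality, their difference is bounded by the sum of the two terms and therefore also tends to $0$, regardless of the path taken to the origin.

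Second, since the numerator $z_\alpha+z_\beta$ is a positive constant depending only on the fixed parameters, the quantity inside the square in Eq.~\ref{eq:twoproportions-arcSin} diverges in absolute value to $+\infty$ as the denominator shrinks to $0$. Squaring preserves the divergence and the prefactor $1/2$ is immaterial, giving $N_{\alpha,\beta}(\varepsilon_1,\varepsilon_2)\to\infty$.

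I do not see a genuine obstacle here; the only point requiring a little care is that a joint limit in two variables can, in principle, behave differently from an iterated limit, so one wants to ensure the conclusion holds regardless of the path along which $(\varepsilon_1,\varepsilon_2)$ approaches $(0,0)$. But since both $\arcsin(\sqrt{\varepsilon_i})$ vanish in absolute, not merely relative, terms, their difference is controlled uniformly by a single $\varepsilon$-$\delta$ estimate, and no path-dependence issue arises. This makes the proposition essentially a direct corollary of the continuity of $\arcsin(\sqrt{\cdot})$ at $0$, strictly analogous to Proposition~\ref{prop:noBias}.
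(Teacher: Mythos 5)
Your proposal is correct and follows essentially the same route as the paper's own proof: continuity of $\arcsin(\sqrt{\cdot})$ at $0$ forces the denominator of Eq.~\ref{eq:twoproportions-arcSin} to vanish, so $N_{\alpha,\beta}$ diverges. Your added remarks on the triangle inequality and path-independence of the joint limit only make explicit what the paper leaves implicit (and you also correct the paper's slip of writing $\arcsin(\varepsilon_i)$ instead of $\arcsin(\sqrt{\varepsilon_i})$), but the argument is the same.
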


\begin{proof}
Since 
$$
{\lim_{x \rightarrow 0}} \arcsin(x) = 0
$$ 
we have 
$$
\lim_{\varepsilon_1, \varepsilon_2 \rightarrow 0^+} (\arcsin(\varepsilon_1) - (\arcsin(\varepsilon_2)) = 0
$$ 
giving 
$$
\lim_{\varepsilon_1, \varepsilon_2 \rightarrow 0^+} N_{\alpha,\beta} (\varepsilon_1, \varepsilon_2) = \infty
$$
\end{proof}
This is an intuitively expected result. Indeed, any reasonable statistical test would require a number of negative samples to be able to distinguish performance differences over the two demographics. Thus, as the error rates on both demographics decrease, we need a larger number of samples to obtain the required negative samples with a sufficiently high probability. 

In the next section, we discuss properties of the proposed bias metric that can be deduced from such types of arguments. Apart from the generality of that alternative approach, which is independent of the statistical test we use, we note that it is not reliant to an approximate formula for the sample size, as for example Eq.~\ref{eq:twoproportions-arcSin}, and thus, there are no validity issues related to the precision of the approximation when extremal properties are studied. 
\section{Properties independent of the choice of statistical test} 
\label{sec:sec5}

Although there is no generally accepted notion of a {\em proper} statistical test, there are some desirable properties that one would expect to be common to all reasonable tests, under the minimal assumption that their outcomes are computed by statistical analysis of the input data, rather than being arbitrary. 

One such property is that a test should detect no bias if, with high probability, there are no misclassifications in a sample of size $N$. We formalize this property in a setting of more than two demographics, generalizing Proposition~\ref{prop:proposedLimitZero}. 

Consider a test set consisting of $k$ demographics, $D_1, D_2, \dots, D_k$, and let the respective error rates of a classification algorithm be $\varepsilon_1, \varepsilon_2, \dots, \varepsilon_k$. According the general principle stated in Section~\ref{sec:sec1}, the bias of the algorithm is inversely commensurable to the sample size required by a statistical test to determine whether there are differences in the error rates among demographics. Thus, the null hypothesis is 
\begin{equation}
\label{eq:nullGeneral}
H_0 : \varepsilon_1 = \varepsilon_2 = \cdots = \varepsilon_k
\end{equation}
and we have the following. 
\begin{proposition}
\label{prop:generalisation3-3}
Let $N_{\alpha,\beta}(\varepsilon_1, \varepsilon_2, \cdots, \varepsilon_k)$ be the required sample size for a proper statistical test of $H_0$, with confidence $\alpha$ and power $1-\beta$. We have 
\begin{equation}
\lim_{\varepsilon_1, \varepsilon_2, \dots, \varepsilon_k \rightarrow 0^+} N_{\alpha,\beta}(\varepsilon_1, \varepsilon_2, \cdots, \varepsilon_k) = \infty 
\label{eq:prop5-1}
\end{equation}
\end{proposition}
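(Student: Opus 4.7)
The plan is to argue by contradiction. Suppose that along some sequence of error-rate vectors $(\varepsilon_1^{(m)}, \dots, \varepsilon_k^{(m)}) \to (0, \dots, 0)$ the required sample size $N_{\alpha,\beta}$ stays bounded by some $N^\star$. I would then show that no data-driven test with sample size $N^\star$ can simultaneously respect the significance level $\alpha$ and attain the power $1-\beta$ in this regime, contradicting the existence of the asserted sample size.

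The central quantitative step is an elementary bound on the probability of observing any misclassification in a sample of size $N^\star$. If $n_i$ denotes the number of observations drawn from demographic $D_i$, with $\sum_i n_i = N^\star$, then
$$
P(\text{at least one error observed}) = 1 - \prod_{i=1}^k (1-\varepsilon_i)^{n_i} \ \leq\ \sum_{i=1}^k n_i\, \varepsilon_i,
$$
which tends to $0$ as $\varepsilon_1,\dots,\varepsilon_k \to 0^+$. Hence, with probability approaching $1$, the test receives a sample consisting entirely of correct classifications.

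Next I would invoke what \emph{proper} ought to mean for the statistical test. The all-correct sample is perfectly consistent with the degenerate configuration $\varepsilon_1 = \cdots = \varepsilon_k = 0$, which lies in the null set of $H_0$ in Eq.~\ref{eq:nullGeneral}. A proper test, whose decision is a measurable function of the observed data and which respects its significance level $\alpha$ uniformly over the null, cannot reject on this sample --- otherwise its Type~I error under configurations with small common $\varepsilon$ would approach $1$, violating the significance bound. Therefore the probability that such a test rejects under the alternative is bounded above by the vanishing quantity above, contradicting the power requirement $1-\beta$ for any $\beta < 1$.

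The main obstacle is not the probability estimate, which is a one-line consequence of Bernoulli's inequality, but making the notion of a \emph{proper} test precise enough for the invocation in the previous paragraph to be rigorous. The argument tacitly restricts the admissible tests to those whose rejection region is determined by the observed data and whose Type~I error is controlled by $\alpha$ throughout the null set, including its limit points at vanishing error rates. Formulating this as an explicit definition, or as a regularity condition on the class of tests, is the crux of converting the sketch into a formal proof; absent such a definition, the statement is most naturally read as a structural property that any reasonable choice of test should inherit.
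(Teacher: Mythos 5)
Your proposal is correct and takes essentially the same route as the paper: both arguments rest on the elementary estimate that, as $\varepsilon_1,\dots,\varepsilon_k \rightarrow 0^+$, a sample of bounded (or suitably controlled) size contains no misclassifications with probability tending to one, combined with the necessarily informal premise that a \emph{proper} test cannot detect bias from an error-free sample. Your contradiction framing, and your grounding of that premise in uniform Type I error control over the null (since the all-correct sample is consistent with $\varepsilon_1=\cdots=\varepsilon_k=0$), is a somewhat sharper articulation than the paper's bare appeal to a ``desirable property'' of proper tests, but the mathematical content is the same, and you correctly identify that formalizing \emph{proper} is the only real gap --- one the paper shares.
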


\begin{proof}
Let $\varepsilon_1, \varepsilon_2, \dots, \varepsilon_k < 1/n^2$, with $n\in \mathbb{N}$. The probability that a sample of size $n$ contains no misclassifications is at least 
$$
(\frac{n^2-1}{n^2})^n \rightarrow 1 \quad \text{as} \quad n\rightarrow \infty
$$
and thus, with high probability, a sample of size $n$ contains no misclassifications as 
${\varepsilon_1, \varepsilon_2, \dots, \varepsilon_k} \rightarrow 0^+$. From the desirable property of the proper statistical tests that they should detect no bias when, with high probability, the sample contains no misclassifications, we have Eq.~\ref{eq:prop5-1}. 
\end{proof}

Proposition~\ref{prop:generalisation3-3} is a test agnostic generalization of Proposition~\ref{prop:proposedLimitZero}. While it is out of the scope of the paper to formally define a complete set of desirable properties of proper statistical sets, we note that Propositions~\ref{prop:noBias},~\ref{prop:biasMonotonicity}, also correspond to properties that one expects to hold for any reasonable statistical test. 

For example, Proposition~\ref{prop:noBias} can be seen an instance of the more general intuitive property that when the error rates tend to be equal, a test would require larger sample sizes to tell them apart. That is, for the null hypothesis in Eq.~\ref{eq:nullGeneral}, we should have 
$$
\lim_{\varepsilon_1, \varepsilon_2, \dots, \varepsilon_{k-1} \rightarrow \varepsilon_k} N_{\alpha,\beta} (\varepsilon_1, \varepsilon_2, \dots, \varepsilon_k)= \infty 
$$

Similarly, Proposition~\ref{prop:biasMonotonicity} can be seen as instance of a more general property, stating that when the largest error rate increases, while all the other errors stay the same, the bias also increases. That is, if 
$$
0\leq \varepsilon_1 \leq \varepsilon_2 \leq \dots \leq \varepsilon_{k-1} \leq \varepsilon_k \leq \varepsilon_{k+1} \leq 1/2
$$
we should have
$$
N_{\alpha,\beta}(\varepsilon_1, \varepsilon_2, \dots, \varepsilon_{k-1},\varepsilon_{k+1}) \leq N_{\alpha,\beta}(\varepsilon_1, \varepsilon_2, \dots, \varepsilon_{k-1},\varepsilon_k).
$$

\section{Discussion} 
\label{sec:sec6}

Here we discuss some aspects of the proposed metric which were not covered in the previous sections, as well as some potential alternative approaches, the detailed development of which is out of the scope of this paper. 

Sample size computations, either analytical, as for example in Eq.~\ref{eq:twoproportions-arcSin}, or through Monte Carlo simulations, are based on assumptions on the distributions of the errors, for example, the $\chi$-squared distribution for the $\chi$-squared test. Thus, bias measures could be developed with direct references to such distributions. However, the sample size approach utilizes a wealth of accumulated theoretical work and practical experience on which distributions are the most appropriate and how computations on them can be performed efficiently. Still, working directly with error distributions remains a possibility in white-box approaches, where experimental evidence on the error distributions in a specific problem domain may be incorporated in the computation of sample sizes. 

The use of a Bayesian framework is also a possibility for an equivalent approach to the measurement of bias. However, the Bayesian approach seems more natural when the error rates and their distributions are not fixed, but updated as samples are processed during the validation of the algorithm. We reiterate here that our aim is not to propose a method for computing distributions of error rates, but to establish a measure of bias from point estimates of the error rates on the various demographics. 

Connections between the proposed approach and information-theory could also be worth investigating. For example, as indicated in the discussion of Propositions~\ref{prop:proposedLimitZero},~\ref{prop:generalisation3-3}, the larger sample size required for bias detection in highly accurate algorithms could be attributed to the low entropy of random variables over the demographics. That is, if an algorithm is highly accurate overall, then, on all demographics, each classification outcome gives a small only amount of information on that algorithm's biases. 

Throughout the paper, there was a tacit assumption of equal size samples from the two demographics. Other approaches, such as considering sample sizes proportional to the sizes of the demographics, could be interesting but would also introduce complexities which, we believe, would be best addressed within the specific contexts of practical applications.

\subsection{Human perception issues}
\label{sec:sec6-1}

As we increasingly rely on algorithms for ever more impactful decisions, paraphrasing a well-known legal dictum \cite{hewart1924}, it does not suffice for an algorithm to be fair, it must also be seen to be fair. Thus, the perception of bias is an aspect that should also be studied, and here we postulate the presence of Weber-type laws. 

Figure~\ref{fig:perception} illustrates an instance of Weber's law on the human {\em sense of number} and our judgment of numerosity \cite{dehaene2003}. Although in visual settings, as the one in Figure~\ref{fig:perception}, Weber's law is considered well-established, it has also been suggested that the empirical findings of the relevant studies depend on the circumstances of the task and stimuli \cite{testolin2021}. Thus, it seems an interesting direction for future work to design experiments aiming at verifying Weber's law in settings, tasks, and stimuli resembling human interactions with deployed machine learning classifiers. 

A further complication that such an empirical study of human perception would need to address is that people might focus disproportionately on the high-entropy outcomes of the experiment, that is, the misclassifications. However, in practice, that could be justified by the higher costs often associated with such outcomes. In the example in Figure~\ref{fig:perception} (top), that means that the attention focuses on the black dots, while the white dots are essentially perceived as background. 

Nevertheless, as an initial observation, we note that the behavior of $N$ described in Proposition~\ref{prop:constantDiff} seems to align well with the human perception of bias, or at least better than the difference $\varepsilon_2 - \varepsilon_1$. That is, as Figure~\ref{fig:perception} suggests, we expect to be easier to discern differences in algorithmic performance over demographics when the number of misclassifications increases while the difference of their absolute numbers stays constant. 

\begin{figure}
    \centering
    \includegraphics[width=0.49\columnwidth]{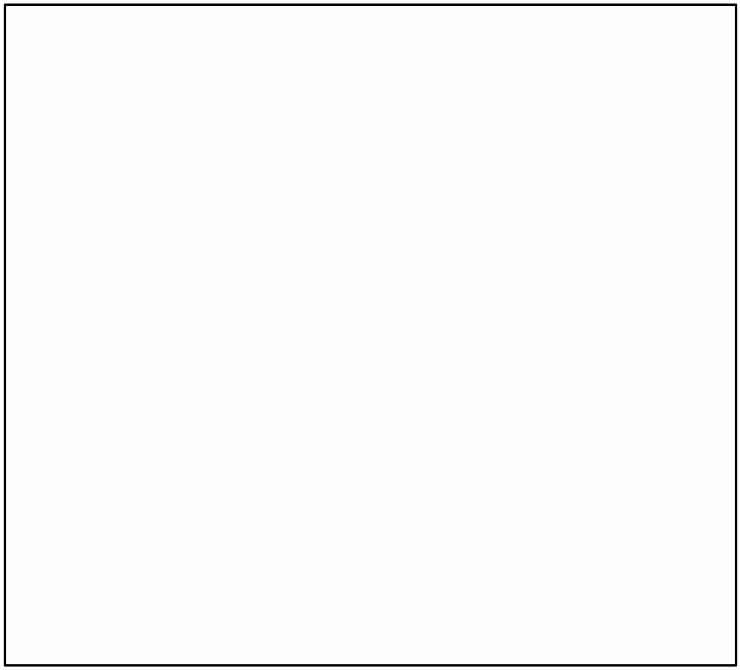}\hfill 
    \includegraphics[width=0.49\columnwidth]{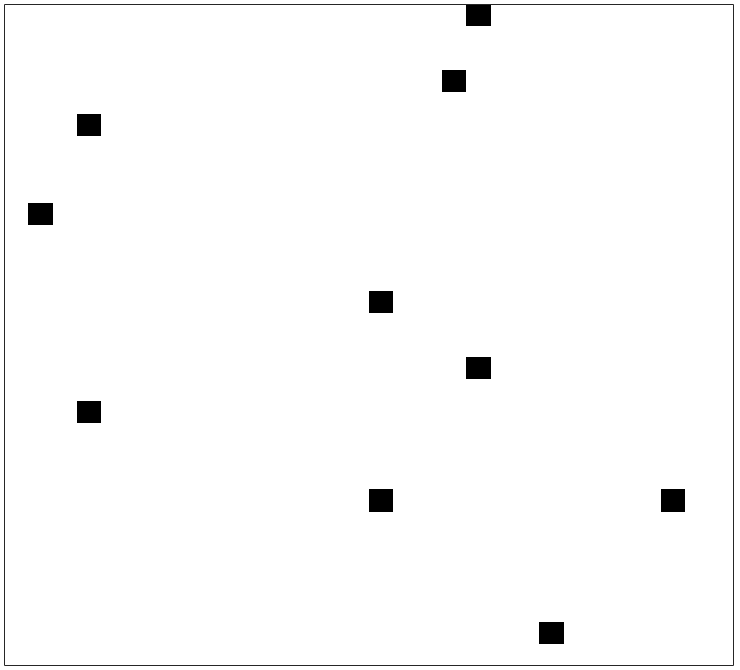}\hfill 
    \includegraphics[width=0.49\columnwidth]{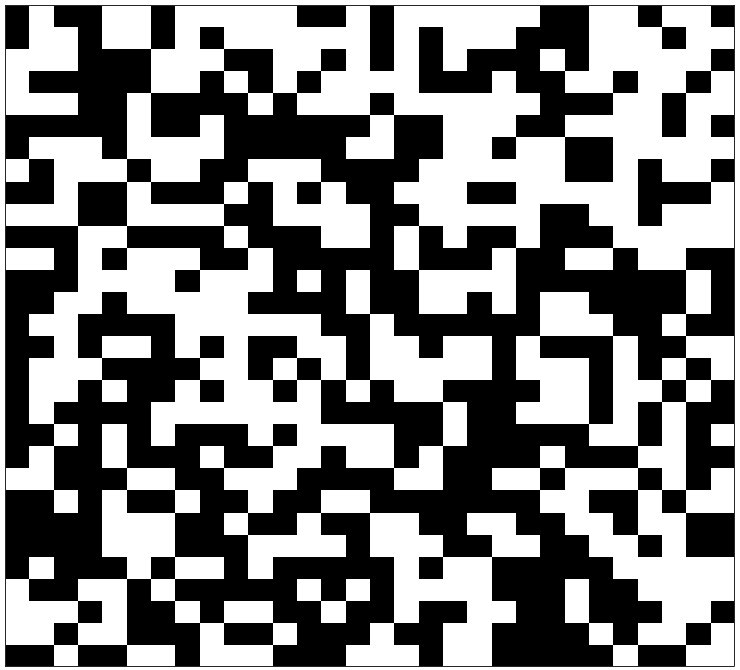}\hfill 
    \includegraphics[width=0.49\columnwidth]{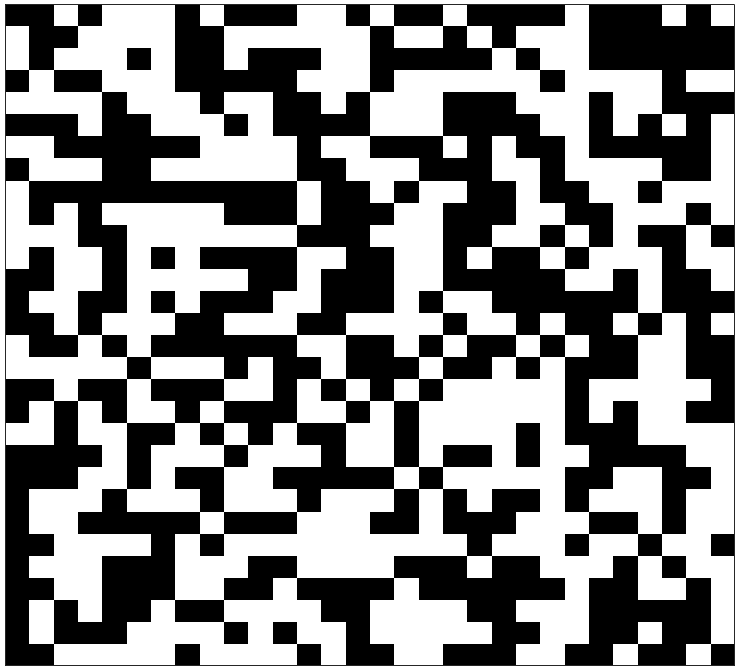}\hfill 
    \caption{In each image, in a $30 \times 30$ grid, a black dot corresponds to a misclassified sample, and a white dot to a correctly classified sample. {\bf Top:} $\varepsilon = 0.00$ (left) against $\varepsilon = 0.01$ (right). {\bf Bottom:} $\varepsilon = 0.49$ (left) against $\varepsilon = 0.50$ (right).}
    \label{fig:perception}
\end{figure}


Finally, in an issue related to dissemination, the insights of this paper could be used to inform the practices of communicating research findings in layman terms. In a typical example from medical research, where classification would correspond to a diagnosis, layman summaries sometimes report just the ratio of positive diagnoses over two groups $r = \varepsilon_2 / \varepsilon_1$, without reporting any of the $\varepsilon_1, \varepsilon_2$, even though that could help assess the implications of the findings. Interestingly, although the size of the study $N$ is often included in layman summaries, information needed for its interpretation might again be missing. Large values of $N$ in particular, might correspond to high values of power and confidence in the design, that is, high values of $\alpha$ and $1 - \beta$, or to small values of $\varepsilon_1, \varepsilon_2$. 
\section{Conclusion} 
\label{sec:sec7} 

We proposed a measure of algorithmic bias, understood here as differences in the algorithm's performance over various demographics. Our approach is based on a simple intuitive principle: when the error rates are very different, small statistical tests suffice to establish that fact; if, in contrast, error rates are quite similar, larger tests are needed. We studied the simplest case of two demographics, using a closed-form formula for the sample size of the $\chi$-squared test, establishing various desirable properties of the proposed approach. We then compared it against two commonly used metrics, the differences and the ratios of the error rates, and demonstrated that it produces essentially different bias rankings. Finally we discuss how some of the desirable properties of the proposed metric arise naturally in sample size analysis and should be considered test-independent. 

Although, as we have shown in this paper, the bias ranking results are independent of the values of the parameters $\alpha, \beta$ of the $\chi$-squared test, one expects that a different statistical test, with different assumptions for the error distributions would result to different rankings. In the future, we plan to explore this research avenue, aiming at obtaining further interesting insights into the notion of algorithmic bias. 


\bibliographystyle{elsarticle-num}
\bibliography{ref}


\end{document}